\newtheorem{theorem}{Theorem}
\newtheorem{definition}[theorem]{Definition}
\newtheorem{corollary}[theorem]{Corollary}
\newtheorem{proposition}[theorem]{Proposition}
\newtheorem{lemma}[theorem]{Lemma}
\newtheorem{remark}[theorem]{Remark}
\newtheorem{example}[theorem]{Example}
\title{Multi-agent coordination via communication partitions}
\author{
Wei-Chen Lee \and
Alessandro Abate \And
Michael Wooldridge \\
\affiliations
University of Oxford \\
}
\begin{document}

\maketitle

\begin{abstract}
    Coordinating the behaviour of self-interested agents in the presence of multiple Nash equilibria is a major research challenge for multi-agent systems. Pre-game communication between all the players can aid coordination in cases where the Pareto-optimal payoff is unique, but can lead to deadlocks when there are multiple payoffs on the Pareto frontier. We consider a communication partition, where only players within the same coalition can communicate with each other, and they can establish an agreement (a coordinated joint-action) if it is envy-free, credible, and Pareto-optimal. We show that under a natural assumption about symmetry, certain communication partitions can induce social optimal outcomes in singleton congestion games. This game is a reasonable model for a decentralised, anonymous system where players are required to choose from a range of identical resources, and incur costs that are increasing and convex in the total number of players sharing the same resource. The communication partition can be seen as a mechanism for inducing efficient outcomes in this context. 
\end{abstract}

\section{Introduction}

Equilibrium selection in the presence of multiple equilibria has been a long-standing problem in the game theory literature. A natural solution, and one that often takes place in practice, is that players would communicate with each other before their individual actions are chosen. While such communication does not bind players to taking a specific action, it may nonetheless enable them to coordinate on a jointly beneficial outcome. 

\begin{figure}[h] 
    \centering
    \begin{subfigure}[b]{0.14\textwidth}
        \centering        
        \begin{tabular}{|c|c|}
            \hline
            1, 1     & 0, 0   \\
            \hline
            0, 0     & 1, 1   \\
            \hline
        \end{tabular}
        \subcaption{Coordination}
    \end{subfigure}
    \begin{subfigure}[b]{0.16\textwidth}
        \centering        
        \begin{tabular}{|c|c|}
            \hline
            2, 1     & 0, 0   \\
            \hline
            0, 0     & 1, 2   \\
            \hline
        \end{tabular}
        \subcaption{Battle of the sexes}
    \end{subfigure}
    \begin{subfigure}[b]{0.17\textwidth}
        \centering
        \begin{tabular}{|c|c|}
            \hline
            3, 3     & 0, 4   \\
            \hline
            4, 0     & 1, 1   \\
            \hline
        \end{tabular}
        \subcaption{Prisoners' dilemma}
    \end{subfigure}
    \caption{Payoff profiles of well-known games}
    \label{fig:classic_games}
\end{figure}

For instance, in a pure \textit{coordination} game (Figure~\ref{fig:classic_games}a), players can agree to coordinate on one of the two Pareto-optimal outcomes. In other cases players may not readily agree on an efficient outcome if they have different preferences, e.g., the two Nash equilibria in the game \textit{Battle of the Sexes} (Figure~\ref{fig:classic_games}b), or they may fail to honour the agreement if they have beneficial deviations, e.g., agreeing to cooperate in not credible in the \textit{Prisoners' dilemma} (Figure~\ref{fig:classic_games}c). 

In this paper, we consider a general mechanism where players are partitioned into coalitions, and are only able to communicate with those within the same coalition. We show that such a partition can help players reach a more socially efficient outcome, relative to both the cases of no communication and of unrestricted communication where all players are part of a grand coalition.

The idea that coordination among some subset of players can induce more socially efficient outcomes has long been considered in the game theory literature. Consider the 3-player game shown in Figure~\ref{fig:3_person_coordination}, taken from \cite{Aumann74CorrelatedEquilibrium}. The only Nash equilibrium payoff profile is $(1, 1, 1)$, achieved by the row and column players choosing `bottom left', and the matrix player choosing the `left' or `right' matrix. However, players can achieve the socially optimal payoff profile (as measured by the sum of the players' payoff) of $(2, 2, 2)$ if the row and column players can agree on playing `top left' or `bottom right', and the matrix player does not know which agreements is reached and attributes equal probability to both, resulting in a best response of choosing the `middle' matrix. Our proposed mechanism is one way to formalise this idea of partial coordination. 

\begin{figure}[h]
    \centering
    \begin{subfigure}[b]{0.14\textwidth}
        \centering
        \begin{tabular}{|c|c|}
            \hline
            0, 1, 3     & 0, 0, 0   \\
            \hline
            1, 1, 1     & 1, 0, 0   \\
            \hline
        \end{tabular}
    \end{subfigure}
    \hspace{0.1cm}
    \begin{subfigure}[b]{0.14\textwidth}
        \centering
        \begin{tabular}{|c|c|}
            \hline
            2, 2, 2     & 0, 0, 0   \\
            \hline
            2, 2, 0     & 2, 2, 2   \\
            \hline
        \end{tabular}
    \end{subfigure}
    \hspace{0.1cm}    
    \begin{subfigure}[b]{0.14\textwidth}
        \centering
        \begin{tabular}{|c|c|}
            \hline
            0, 1, 0     & 0, 0, 0   \\
            \hline
            1, 1, 1     & 1, 0, 3   \\
            \hline
        \end{tabular}
    \end{subfigure}
    \caption{A 3-player coordination game}
    \label{fig:3_person_coordination}
\end{figure}

We do not explicitly model how players communicate with each other. Rather, we assume that where there is an agreement (a pure joint-action among those within the same coalition) that satisfies envy-freeness, where no player envies another for their assigned action; credibility, where no player has an incentive to unilaterally deviate from the agreement; and Pareto-optimality, where the agreement cannot be improved upon for some players without some other players being worse-off, then such an agreement can be reached. If multiple agreements satisfy these conditions, then communication enables players within the coalition to coordinate on one of the qualified candidates. 

Since players cannot communicate across coalitions, they need to hypothesise about the behaviour of other coalitions in order to assess the expected outcome of their agreement. We propose a simple \textit{symmetry principle}, which says that when faced with uncertainty over symmetric outcomes, players attribute the same probability to each outcome. This principle reflects the fact that players have maximal epistemic uncertainty over these outcomes, and there is no basis for believing one outcome as being more likely than another.

The symmetry principle is applicable in the Singleton Congestion Game (SCG) with identical resources. In such a game, each of $n$ players chooses one among $m$ identical resources, and incur a cost that is increasing and convex in the total number of players choosing the same resource. Naturally, each player wants to choose a resource that is chosen by the least number of other players, but because resources are identical, there is no basis for preferring one resource over another. 

Because each player perceives each resource as being equally likely to be the least chosen resource, each player is indifferent between choosing any of the resources, and can be modelled as choosing each resource with probability $\frac{1}{m}$. From an epistemic perspective, this does not imply that players choose stochastically between resources (although they might), but rather that they each have epistemic uncertainty about the choices of others. From the perspective of an outsider, this epistemic uncertainty about the behaviour of all players is represented by the mixed-strategy Nash equilibrium where all players play each resource with probability $\frac{1}{m}$.

Under the symmetry principle, we show that by dividing player into coalitions, as represented by a communication partition, players within each coalition can reach agreements that are envy-free, credible, Pareto-optimal and socially optimal, in SCGs with identical resources. Importantly, these outcomes are not achievable if players cannot communicate (i.e., in a partition comprising of singleton coalitions), or have unrestricted communication (i.e., in a grand coalition). Thus, the communication partition can be seen as a mechanism that generalises these two standard setups. 

SCG is a simple yet highly relevant game for modelling the anonymous interactions between players choosing congestible resources. While the main focus of the paper is on identical resources, our work applies equally to cases where resources have differential cost functions that are not known to the players. SCG is thus a relevant model in congestion settings where players have epistemic uncertainty not only about the behaviour of other players, but also the cost function associated with each resource. 

The partition mechanism offers two perspectives on pre-game communication. Descriptively, it can be used to explain why certain outcomes are observed in practice by modelling the latent communication between subsets of players; Prescriptively, it can offer guidance on how to design communication partitions that induce socially efficient outcomes.

The remainder of this section discusses the related literature. Section 2 formalises the preliminary ideas in this paper, including the SCG, the communication partition, the three conditions for reaching agreements (envy-freeness, credibility, and Pareto-optimality), and the symmetry principle. Section 3 analyses the conditions for a communication partition to result in envy-free, credible, and Pareto-optimal agreements to be reached within each coalition, and whether collectively such agreements correspond to socially efficient outcomes. Section 4 discusses the extent to which our findings generalises to a broader classes of congestion games with differential resource costs. Section 5 concludes this paper.

\subsection{Related literature}

To our knowledge, the use of a communication partition to augment strategic form games has not been previously studied. This work is nonetheless built on many long established ideas in the game theory literature. 

The idea of representing the communication network between players with a partition was first introduced in \cite{Myerson77Graph-theoretic}. This work focuses on cooperative games with transferable utility, and the role of communication as a means of negotiating for shares of the value of the coalition. In contrast to \cite{Myerson77Graph-theoretic}, our work relates to non-cooperative games with non-transferable utility, where the role of communication is to coordinate actions.

Our work can also be seen as an alternative mechanism to the well-known concept of \textit{correlated equilibrium} developed in \cite{Aumann74CorrelatedEquilibrium,Aumann87CorrelatedEquilibrium}, which relies on a trusted third-party mediator for coordination. In the game shown in Figure \ref{fig:3_person_coordination}, such a device can induce the socially optimal outcome by correlating the actions of the row and column players (either at `top left' or `bottom right') with equal probability, while the probability distribution over these signals is known to the matrix player. 

Since a trusted mediator is not always available, an alternative mechanism relying on `cheap-talk' communication was developed in \cite{Crawford1982,Farrell1996}, where players communicate directly with one another in a non-binding fashion. This led to a rich literature exploring how correlated equilibria (or related concepts) can be implemented in a cheap-talk setting where all players can communicate bilaterally and broadcast messages to all others (e.g., \cite{Barany92CompleteInfoCheaptalk,Heller2012Communication,Forges90CheaptalkIncompleteInfo,BenPorath2003CheaptalkIncompleteInfo,Gerardi2004UnmediatedCommunication,Abraham06RobustSecretSharing,Heller2010Minority-proofCheap-talkProtocol,Abraham2018AsynchronousCheapTalk}. This literature is concerned with implementing a mediated equilibrium concept using a fixed communication network; in contrast, our work concerns how varying such a communication network could induce different outcomes. 

In terms of (atomic) congestion games, the seminal work of \cite{Rosenthal1973CongestionGames}, and later \cite{Monderer1996PotentialGames}, established important and related properties of the game, such as the finite improvement property, existence of pure-strategy Nash equilibria, and the correspondence between congestion games and potential games. Further works on the `price of anarchy' (PoA) of atomic congestion games (e.g., \cite{Koutsoupias2009WorstCaseEquilibria,Christodoulou05PoACongestionGames,Christodoulou2005CorrelatedEquilibriaPoA,Awerbuch2005PoA,Aland2011PoA,Bhawalkar2014PoA}) study the gap between the worse equilibrium outcome and the socially optimal outcome. The SCG that we study is a specific case of the model in \cite{Koutsoupias2009WorstCaseEquilibria} with unweighted players. In this context, our work can be seen as introducing a mechanism for removing poor equilibria and improving the PoA of such games.

A related piece of work \cite{Fotakis2006CoalitionCongestionGames} studies the efficiency of a congestion game of `coalitional' players, where a coalition (i.e., a subset of players) can act as a single player to minimise the maximum cost incurred by any player inside the coalition. This differs from our setup where players within a coalition are self-interested, and that any agreement reached by players within a coalition is non-binding.

On symmetric games, \cite{Nash1951} established the result that every finite symmetric game has a symmetric mixed-strategy Nash equilibrium, which we consider to be the baseline outcome in the absence of communication as a natural consequence of the symmetry principle. We also rely on the known concept of epistemic uncertainty \cite{Aumann1995EpistemicConditions,Aumann2016Epistemics} to model a player's belief about the behaviour of other players. Our work utilises these concepts to develop a model of player behaviour in the presence of a communication partition.

\section{Preliminaries}

\subsection{Singleton Congestion Games}

A Singleton Congestion Game (SCG) is a game with $n$ players, $N$, and $m$ resources, $M$. Players simultaneously choose one of the resources, so $M$ also represents the action set of each player. Each player incurs a cost $f: \mathbb{N} \to \mathbb{Q}$ that is increasing and convex in the total number of players that choose the same resource. That is, given some pure-strategy profile (also referred to as an outcome) $a \coloneqq (a_1, ..., a_n) \in M^n$, the cost to player $i$ is $c_i(a) \coloneqq f(n_{a_i}(a))$, where $n_x(a)$ denotes the number of players that choose resource $x$ in $a$, and $f' > 0$, $f'' \geq 0$ characterise a (strictly) increasing and (weakly) convex function. The game can be specified by $G = \langle n, m, f \rangle$. 

Following \cite{Christodoulou05PoACongestionGames}, we consider two common measures of efficiency: the total cost to all players $\bar c$ (this notation is chosen to avoid potential confusion with other cost-related notations), and the highest cost incurred by any player $\hat c$, where 
\[ \bar c(a) = \sum_{i \in N} f(n_{a_i}(a)) = \sum_{x \in M} n_x(a) f(n_x(a)), \] 
and
\[ \hat c(a) = \max_{i \in N}f(n_{a_i}(a)) = \max_{x \in M}f(n_x(a)) .\]

\subsection{Communication partition}

We use a partition $\pi$ of $N$ to indicate the non-empty, mutually exclusive, and collectively exhaustive subsets of players that can freely communicate with one another. For example, the notation $[12|34|5]$ represents a partition where players $1$ and $2$ can communicate together (i.e. form a communicating coalition), and similarly between players $3$ and $4$, while player $5$ does not communicate with any other player. The partition is common knowledge. The \textit{grand coalition} is the partition where all players are in one coalition (e.g., $[12345]$); the \textit{singleton partition} is the partition where each coalition consists of a single player (e.g., $[1|2|3|4|5]$) and no player can communicate with another player. 

An SCG, $G$, augmented by a communication partition game is specified by $G^{\pi} = \langle n, m, f, \pi \rangle$ and played over two phases. In the communication phase, players within a coalition communicate with each other for an indefinite period of time. Once players are satisfied with their communication, the game proceeds to the action phase, where players choose their actions simultaneously, unbound by any agreement reached during the communication phase. 

We do not explicitly model how players communicate within a coalition. Instead, we assume that players will agree on some pure action-profile $a_C \in M^{|C|}$ and adhere to it in the action phase if $a_C$ is envy-free, credible, and Pareto-optimal, given some shared belief about the behaviour of players outside of their coalition. We unpack each of these assumptions in this subsection. We are not concerned with cases where no such agreement exists, other than to note that more relaxed conditions can be assumed which are beyond the scope of this paper.


\subsubsection{Correlated actions}

Let $C \subseteq N$ be a coalition of players within a partition $\pi$. Because players in $C$ can communicate with each other, they can coordinate on an agreement $a_C \in M^{|C|}$ that could not otherwise be achieved. To better describe the distribution of outcomes that can be achieved through coordination, we require a broader definition than that of a mixed-strategy profile where players are assumed to act independently. 

\begin{definition}[Generalised strategy profile]
    A generalised strategy profile $s \in \Delta (M^n)$ is a probability distribution over all outcomes $M^n$ of the game.
\end{definition}

A generalised strategy profile differs from a mixed-strategy profile $s \in (\Delta M)^n$ in that players can correlate their actions. For example, in a two-player, two-action game with action set $\{T, B\}$ for the row player and action set $\{L, R\}$ for the column player, a generalised strategy-profile can assign $Pr((T, R)) = Pr((B, R) = Pr((B, L)) = 1/3$ and $Pr((T, L)) = 0$, while this distribution cannot be achieve by a mixed strategy profile. However, a generalised strategy profile may not be feasible if some players cannot communicate with each other. To account for the restricted communication imposed by the communication partition $\pi$, we define a feasible strategy profile as follows.

\begin{definition}[Feasible strategy profile]
    A feasible strategy profile $s \in \Delta (M^n)$ given a communication partition $G^\pi$ is a generalised strategy profile where any two players that are not in the same coalition act independently, i.e., for all $C \in \pi$ and for all $i, j$ such that $i \in C$ and $j \not \in C$, $Pr(a_i, a_j) = Pr(a_i) \cdot Pr(a_j)$, where $Pr(a_i)$ denotes the marginal probability that $i$ plays $a_i$ in $s$, and $Pr(a_i, a_i)$ denotes the joint probability that $i$ plays $a_i$ and $j$ plays $a_j$ in $s$.
\end{definition}

This formulation makes explicit the role of the partition, which enables players within a coalition to coordinate on a single, pure action-profile if certain conditions are satisfied. Notice that while players within each coalition may reach a `pure' agreement (a pure action-profile), the resulting feasible strategy profile is stochastic. This is due to the epistemic uncertainty of an outsider about which of the many symmetric agreements has been reached within each coalition, which we discuss below.


\subsubsection{Epistemic uncertainty and the symmetry principle}

To analyse how rational players behave in the absence of communication, we take an epistemic game theory perspective in accordance with \cite{Aumann1995EpistemicConditions}, where each player has beliefs about the beliefs and actions of others. Under the assumption of mutual knowledge of rationality, each player's choice of action is a best-response to their beliefs, and each player knows that others' actions are also best-responses of their beliefs. 

Concretely, each player $i \in N$ has a \textit{conjecture} $\mu^i_{-i}: \Delta (M^{n-1}) \to [0, 1]$, which is a probability distribution over the joint actions of all other players $-i \coloneqq N \setminus \{i\}$. A \textit{rational} player $i$ chooses an action that is a best-response to $\mu^i_{-i}$, i.e., 

\[ \arg \min_{a_i \in M} \mathbb{E}_{\mu^i_{-i}} (c_i(a_i, a_{-i})). \] 

The SCG with identical resources has a specific form of symmetry that we call action-symmetry.\footnote{This is distinct from another form of symmetry, commonly referred to as payoff-symmetry, that is discussed in the literature (e.g. \cite{Plan23Symmetry}).} We say that the game is totally action-symmetric if, for any permutation $\psi: M \to M$, for any $a \in M^n$, and for all $i \in N$, $c_i((a_1, ..., a_n)) = c_i((\psi(a_1), …, \psi(a_n))$. That is, a player's cost remains unchanged if every player's chosen resource is permuted by $\psi$. 

Players are said to have a \textit{common prior} if differences in the players' conjectures $\mu^i_{-i}$ are solely attributable to differences in their information. We make a natural assumption that players abide by what we term the \textit{symmetry principle}, which says that when faced with symmetric uncertainties, a player places equal probability on each outcome. In an action-symmetric game, this assumption leads all players to have the same common prior about the behaviour of each coalition within the partition: if a coalition $C$ can reach an agreement that is envy-free, credible, and Pareto-optimal, then an outside player $i \not \in C$ would hold a conjecture $\mu^i_C$ which places equal probability on all symmetries (i.e. permutations on the labelling of resources) of such agreement. 

For example, consider a singleton coalition consisting of player $j$. By the symmetry principle, $j$ would reason that whatever agreements that can be reached by all remaining players $-j$ in their respective coalitions, such agreements are symmetric about all resource. Thus the expected number of players in $-j$ choosing a resource is the same among all resources, and a rational player $j$ would be indifferent between choosing any of the resources. Whichever resource is chosen by $j$, from the perspective of another player $i \in -j$, the symmetry principle means that $j$ is just as likely to choose any other resources, and thus all players in $-j$ would place probability $1/m$ on $j$ playing any one of the resources.

\begin{remark} \label{rem:symmetry}
The example illustrates that, according to the symmetry principle, all players hold the same common prior in a symmetric game. The action-symmetry of the game arises from the fact that all resources are identical, and thus any relabelling of the resources in an outcome has no impact on a player's payoff. Furthermore, the symmetry principle implies that players across different coalitions always act independently, and thus always play a feasible strategy profile.
\end{remark}


\subsubsection{Expected cost under a common prior} 

We noted in Remark \ref{rem:symmetry} that, under the symmetry principle, every player in the same coalition $C$ has the same symmetric conjecture about the usage of each resource by players outside of $C$, which we denote by $-C \coloneqq N \setminus C$. We show that this effectively means that each coalition of players face, in expectation, an increasing and convex cost function that is identical across resources.

\begin{proposition} \label{prop:subgame}
    The expected cost faced by some player in $C$ who chooses some resource $x$, among a total of $v$ players in $C$ choosing the same resource, can be expressed as $g_\mu(v) \coloneqq \sum_{u=0}^{\infty} \mu(u) \cdot f(u + v)$, where $\mu: \mathbb{N} \to [0, 1]$ is the probability distribution on the number of players $u$ in $-C$ choosing the resource $x$. Then $g_\mu$ is strictly increasing and weakly convex.
\end{proposition}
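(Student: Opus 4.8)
The plan is to read $g_\mu$ as a \emph{convex combination of horizontal shifts of $f$}. Writing $f_u(v) \coloneqq f(u+v)$, each $f_u$ is simply $f$ translated left by $u$, so it inherits $f$'s monotonicity and convexity; and $g_\mu(v) = \sum_{u} \mu(u)\, f_u(v)$ is a weighted average of these shifts with weights $\mu(u) \ge 0$ summing to one. Since both ``strictly increasing'' and ``weakly convex'' are properties closed under nonnegative combinations, the result should follow by pushing the finite-difference operators through the sum. Before doing so I would record that, although the sum is written over all $u \in \mathbb{N}$, the support of $\mu$ is finite: $\mu(u) = 0$ whenever $u$ exceeds $|N \setminus C|$, the number of players outside $C$. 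Hence $g_\mu(v)$ is a finite sum of rationals, it is well defined, and there is no issue in interchanging summation with the difference operators below.

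For strict monotonicity I would compute the first forward difference
\[
g_\mu(v+1) - g_\mu(v) \;=\; \sum_{u=0}^{\infty} \mu(u)\,\bigl(f(u+v+1) - f(u+v)\bigr).
\]
Because $f$ is strictly increasing, every bracketed term is strictly positive, and since the weights $\mu(u)$ are nonnegative and sum to one (so at least one is positive), the whole sum is strictly positive. Thus $g_\mu$ is strictly increasing.

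For weak convexity I would turn to the second forward difference
\[
g_\mu(v+1) - 2 g_\mu(v) + g_\mu(v-1) \;=\; \sum_{u=0}^{\infty} \mu(u)\,\bigl(f(u+v+1) - 2f(u+v) + f(u+v-1)\bigr).
\]
Weak convexity of $f$ makes each bracketed second difference nonnegative, so the nonnegatively weighted sum is nonnegative, giving $g_\mu(v+1) - g_\mu(v) \ge g_\mu(v) - g_\mu(v-1)$, which is exactly weak convexity. If one prefers to treat $f$ as the restriction of a smooth function, as the notation $f', f''$ suggests, the identical argument runs with derivatives, $g_\mu'(v) = \sum_u \mu(u)\, f'(u+v) > 0$ and $g_\mu''(v) = \sum_u \mu(u)\, f''(u+v) \ge 0$.

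The only genuine subtlety, and the step I would be most careful about, is the legitimacy of moving the difference (or derivative) operator inside the sum over $u$; this is precisely what the finite support of $\mu$ secures, so the apparent obstacle dissolves once that observation is in place. The conceptual content is simply that monotonicity and convexity are preserved under mixing, and the expected-cost function $g_\mu$ is nothing more than a mixture of shifted copies of the per-resource cost $f$.
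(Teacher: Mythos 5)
Your proof is correct and follows essentially the same route as the paper's: both arguments push the first and second forward-difference operators through the sum and use the strict monotonicity and weak convexity of $f$ termwise. Your additional remarks on the finite support of $\mu$ and on at least one weight being positive are minor refinements of the same argument, not a different approach.
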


\begin{proof}
    Since $f$ is strictly increasing, each summand in $g$ is strictly increasing, and therefore $g$ is strictly increasing. For convexity, note that for any $v \in \mathbb{N}$, $g_\mu(v+2) + g_\mu(v) - 2 g_\mu(v+1)
    = \sum_{u = 0}^{\infty} \mu(u) [f(u+v+2) + f(u+v) - 2f(u+v+1)]$. The convexity of $f$ implies that each summand is non-negative, and thus $g_\mu$ is also convex.
    
\end{proof}

\begin{remark} \label{rem:subgames}
Under the symmetry principle, players in $C$ face the same $g_\mu$ for all resources in $M$, and thus we can treat $C$ as a set of players playing a `subgame' of the augmented SCG with a cost function $g_\mu$ that is also increasing and convex. Viewed in this light, the communication partition effectively decomposes an SCG into smaller subgames. This `veil of ignorance' through action-symmetry allows each coalition of players to simply consider their own agreement, without concern for the agreement reached by the other coalitions.
\end{remark}

\begin{corollary} [Coalition as subgame] \label{cor:subgame}
    If players within a coalition $C$ have a common and symmetric conjecture about the choices of others outside of the coalition, then they are equivalently playing an augmented SCG $\langle |C|, m, g, \pi^G \rangle$  where $g$ is increasing and convex, and $\pi^G$ is the grand coalition. 
\end{corollary}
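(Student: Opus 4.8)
The plan is to assemble the corollary directly from Proposition~\ref{prop:subgame} together with the independence and symmetry guaranteed by the symmetry principle (Remark~\ref{rem:symmetry}), so that the work reduces to verifying that the influence of outsiders collapses into a single resource-independent cost function. First I would fix the common symmetric conjecture that the members of $C$ hold over the joint actions of $-C$. Since each outside coalition's conjecture places equal probability on all resource-permutations of its agreement, the aggregate conjecture over outsider profiles is invariant under any permutation $\psi$ of the resource labels. It follows that the marginal distribution of the number of players in $-C$ choosing a resource $x$ is one and the same distribution $\mu$ for every $x \in M$; this is the step where action-symmetry does the essential work.

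Next I would invoke feasibility: because members of $C$ and members of $-C$ lie in different coalitions, Remark~\ref{rem:symmetry} tells us they act independently, so the conjectured distribution $\mu$ of outside usage is unaffected by the joint choice $a_C$ of the coalition. Consequently, fixing any $a_C \in M^{|C|}$ and any player $i \in C$, the only randomness in $i$'s realised cost $f(v + u)$ comes from the outside count $u$, where $v$ is the number of coalition members sharing $i$'s resource. Taking the expectation over $u \sim \mu$ yields precisely $g_\mu(v)$ as defined in Proposition~\ref{prop:subgame}, and that proposition supplies that $g \coloneqq g_\mu$ is strictly increasing and weakly convex. Because $\mu$ is identical across resources, $g$ is the same function for every resource, so each member of $C$ incurs an expected cost depending only on how many coalition members share its chosen resource.

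This last fact is exactly the payoff structure of an SCG with player set $C$ (hence $|C|$ players), the same $m$ resources, and cost function $g$; and since every member of $C$ can communicate by hypothesis, the partition relevant to this induced game places them all together, i.e.\ the grand coalition $\pi^G$, delivering $\langle |C|, m, g, \pi^G \rangle$. I expect the main obstacle to be the careful justification in the second step that the outside contribution decouples into a single function of the own-resource count alone: one must use symmetry to equate the per-resource marginals and independence to sever $\mu$ from $a_C$, so that the coalition is genuinely playing a \emph{deterministic-cost} SCG rather than merely a game with stochastic costs.
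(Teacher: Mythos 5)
Your argument is correct and follows essentially the same route the paper takes: the paper leaves Corollary~\ref{cor:subgame} without a separate proof, treating it as immediate from Proposition~\ref{prop:subgame} together with Remark~\ref{rem:subgames} (same marginal $\mu$ on every resource by action-symmetry, independence across coalitions, hence a single increasing convex $g_\mu$). Your writeup simply makes explicit the steps the paper leaves implicit, including the useful observation that only the per-resource marginal of the outsiders' count matters, so nothing is missing.
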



\subsubsection{Envy-freeness, credibility, and Pareto-optimality}

We have so far considered the `hard' constraints that a communication partition imposes on players' ability to coordinate. We now consider the `soft' constraints that incentivise players within a coalition to agree on a certain joint action and adhere to it. We term these conditions envy-freeness, credibility, and Pareto-optimality.

\begin{definition} \label{def:credible_envy-free_Pareto}
    Consider a subset of players $C \subseteq N$ and a common prior $\mu$ about the behaviour of players outside of $C$. 

    An agreement $a_C \in M^{|C|}$ is \textit{envy-free} under $\mu$ if all players incur the same expected cost, i.e. for all $i, j \in C$, \[ \mathbb{E}_{\mu}[c_i(a_C)] = \mathbb{E}_{\mu}[c_j(a_C)] \]

    An agreement $a_C \in M^{|C|}$ is \textit{credible} under $\mu$ if no player can benefit from a unilateral deviation, i.e., for all $i \in C$, $x \in M$,    
    \[ \mathbb{E}_{\mu}[f(n_{a_i}(a_C))] \leq \mathbb{E}_{\mu}[f(n_x(a_C) + 1)] \]

    An agreement is \textit{Pareto-optimal} under $\mu$ if no player can be better-off without some player being worse-off. I.e. $\nexists a'_C \in M^{|C|}$ such that for all $i \in C$,
    \[ \mathbb{E}_{\mu}[c_i(a'_C)] \leq \mathbb{E}_{\mu}[c_i(a_C)] \] 
    and for some $j \in C$,
    \[ \mathbb{E}_{\mu}[c_j(a'_C)] < \mathbb{E}_{\mu}[c_j(a_C)]. \] 
\end{definition}

Credibility is analogous to the concept of a Nash equilibrium for the full set of players $N$. It is a useful concept here to ensure that players adhere to the agreement, and relates to the concept of self-enforcing communication in the cheap-talk literature (e.g., \cite{Farrell1996}). Envy-freeness ensures that an agreement is reached without the possibility of a deadlock where players endlessly negotiate over which agreement $a_C$ to coordinate on. In effect, under envy-freeness all communicating players incur the same expected cost from the agreement. And finally, Pareto-optimality ensures that players do not agree on some sub-optimal joint-action when there are mutually beneficial alternatives.

\begin{example}
    Consider an augmented SCG with 3 players, 2 resources, and players form the grand coalition. There are two classes of agreements in this game: one where all players choose the same resource, and another where two players choose one resource, and one player chooses the other resource. The former is envy-free, but not credible nor Pareto-optimal (all players want to move to the unused resource); the latter is credible and Pareto-optimal, but not envy-free (players sharing the first resource envy the lone player choosing the second resource). 
\end{example}

In the SCG, these concepts are related in agreements where as many resources as possible are used, which we term \textit{covering} agreements.

\begin{definition} [Covering agreements]
    An agreement $a_C$ is \textit{covering} if as many resources are used as possible. I.e., if $|C| \geq m$, then $n_x(a_C) \geq 1$ for all $x \in M$; if $|C| < m$, then $n_x(a_C) \leq 1$ for all $x \in M$.
\end{definition}

\begin{proposition} \label{prop:covering_agreements}
    An agreement $a_C$ is covering if and only if it is Pareto-optimal; a covering agreement that is envy-free is credible; and a credible agreement is a covering agreement.
\end{proposition}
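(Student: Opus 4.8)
The plan is to work entirely inside the ``subgame'' induced by the coalition. By Corollary~\ref{cor:subgame} the players in $C$ face a common, resource-independent, strictly increasing and convex effective cost $g=g_\mu$, so every expected cost appearing in Definition~\ref{def:credible_envy-free_Pareto} depends only on the occupancy profile $(n_x(a_C))_{x\in M}$; write $k=|C|$. Since $g$ is strictly increasing, each of the three conditions becomes a purely combinatorial statement about occupancies. I would record these reformulations first: $a_C$ is envy-free iff all \emph{used} resources carry the same number of players; $a_C$ is credible iff $n_{a_i}(a_C)\le n_x(a_C)+1$ for every $i\in C$ and every $x\in M$ (no player strictly gains by moving, since moving $i$ to $x$ turns $g(n_{a_i})$ into $g(n_x+1)$); and, because the players in $C$ are interchangeable under $g$, $a_C$ fails to be Pareto-optimal iff there is an occupancy profile $n'$ with $\sum_x n'_x=k$ whose sorted vector of per-player costs is componentwise $\le$ that of $a_C$ with at least one strict inequality.

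Next I would dispatch the two easy implications. For \emph{credible $\Rightarrow$ covering}, argue the contrapositive: if $a_C$ is not covering then, in both the case $k\ge m$ and the case $k<m$, there coexist an empty resource $y$ and a resource $z$ with $n_z\ge 2$; a player on $z$ then has $n_z\ge 2>0+1=n_y+1$, violating credibility. For \emph{covering and envy-free $\Rightarrow$ credible}, envy-freeness forces all used resources to share a common occupancy $t$; covering then forces either $t=k/m$ on all $m$ resources (when $k\ge m$; the implication is vacuous when $m\nmid k$, as no such agreement exists) or $t=1$ with every resource holding at most one player (when $k<m$). In either case occupancies differ by at most one, so $n_{a_i}\le n_x+1$ for all $i,x$, i.e.\ credibility.

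For \emph{covering $\iff$ Pareto-optimal}, the direction \emph{PO $\Rightarrow$ covering} reuses the same swap: if $a_C$ is not covering, move one player from a resource $z$ with $n_z\ge 2$ to an empty resource $y$. The players remaining on $z$ improve from $g(n_z)$ to $g(n_z-1)$, the mover improves from $g(n_z)$ to $g(1)$, and everyone else is unchanged; since $n_z\ge 2$ both changes are strict decreases and nobody is worse off, so this is a strict Pareto improvement and $a_C$ is not Pareto-optimal.

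The substantive direction is \emph{covering $\Rightarrow$ PO}, and this is where I expect the real work. The case $k<m$ is immediate, since covering puts every player alone on a resource at the global minimum cost $g(1)$, so no one can improve at all. For $k\ge m$ I would use the reformulation above: writing $S(n,\theta)=\sum_{x:\,n_x\ge\theta}n_x$ for the number of players on resources of occupancy at least $\theta$, a weak Pareto improvement is exactly a profile $n'$ with $S(n',\theta)\le S(n,\theta)$ for all $\theta\ge 1$, and it suffices to show that for covering $n$ this forces $n'=n$, so that no strict improvement exists. The key leverage is that a covering profile with $k\ge m$ occupies \emph{all} $m$ resources, the maximum possible, so $n'$ can never spread its players over more resources than $n$; I would argue that this single extremal fact makes the whole family of inequalities $S(n',\theta)\le S(n,\theta)$ collectively tight, pinning the number of resources of each occupancy in $n'$ to match that of $n$. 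Making this rigorous is the main obstacle: single-swap reasoning does not suffice, because one must rule out coordinated multi-player reshuffles that lower some per-player occupancies while raising others (for instance the mutually incomparable pair $(3,1)$ versus $(2,2)$ when $m=2$). The crux is precisely that once all $m$ resources are occupied, every such reshuffle violates at least one of the inequalities $S(n',\theta)\le S(n,\theta)$; an equivalent and perhaps cleaner route is to exhibit a strictly positive vector of player-weights for which the covering profile is the unique minimizer of the weighted total cost, which certifies Pareto-optimality directly.
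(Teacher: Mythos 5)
Three of your four implications are correct and essentially coincide with the paper's own proof: the contrapositive swap from an occupied resource with $n_z\ge 2$ to an empty resource gives both \emph{credible $\Rightarrow$ covering} and \emph{Pareto-optimal $\Rightarrow$ covering}, and your occupancy reformulation of envy-freeness plus covering gives \emph{credibility} exactly as the paper argues (indeed you are slightly more careful than the paper in splitting the cases $|C|\ge m$ and $|C|<m$). The reduction to the subgame cost $g_\mu$ via Corollary~\ref{cor:subgame} is also the paper's intended setting.

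The genuine gap is the direction \emph{covering $\Rightarrow$ Pareto-optimal} for $|C|\ge m$: you correctly isolate it as the substantive step, correctly observe that single-swap reasoning must contend with multi-player reshuffles such as $(3,1)$ versus $(2,2)$, and correctly identify the extremal fact (all $m$ resources occupied) as the crux --- but you then stop, writing that making this rigorous is ``the main obstacle'' without executing either of your two proposed routes. A proof that announces its own missing step is not a proof. For what it is worth, the paper's own argument here is just the one-line claim that adding players to a resource $y$ raises the cost of the incumbents of $y$ (which silently ignores that those incumbents may themselves relocate), so your suspicion that something more is needed is defensible; but the gap closes in two lines. For any agreement, the number of occupied resources equals $\sum_{i\in C} 1/n_{a_i}(a_C)$, since each occupied resource $x$ contributes $n_x\cdot(1/n_x)=1$. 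A weak Pareto improvement $a'_C$ requires $n_{a'_i}(a'_C)\le n_{a_i}(a_C)$ for every $i$ (as $g$ is strictly increasing), hence $\sum_i 1/n_{a'_i}(a'_C)\ge\sum_i 1/n_{a_i}(a_C)=m$; since at most $m$ resources exist the sum is also at most $m$, so every term is forced to be equal, every player's occupancy is unchanged, and no player strictly improves. (The case $|C|<m$ is immediate, as you note, since every player already sits at the global minimum $g(1)$.) With this lemma inserted, your argument is complete and in fact tighter than the paper's.
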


\begin{proof}
    (Covering $\implies$ Pareto-optimal) Any changes to a covering agreement necessarily involves moving some players away from some resource $x$, and adding them to another resource $y$, thereby increasing the cost to players that already use resource $y$. So there can be no Pareto-improvement on a covering agreement. (Pareto-optimal $\implies$ covering) If an agreement is not covering, then some player $i$ can move from a shared resource $x$ to an unused resource $y$. This lowers the costs of player $i$ and all the other players that choose $x$, while the costs of all remaining players are unchanged. A non-covering agreement is therefore not Pareto-optimal. (Covering $\&$ envy-free $\implies$ credible) Consider a covering and non-credible agreement, so that there is a beneficial deviation by some player $i$ from resource $x$ to resource $y$, i.e. $n_x(a_C) > n_y(a_C) + 1$. Because the agreement is covering, there is some player $j$ that chooses $y$ in $a_C$, so $i$ must envy $j$ and the agreement is not envy-free. (Credible $\implies$ covering) A non-covering agreement is clearly not credible, as any players sharing a resource would deviate to an unused one.
\end{proof}

This result is shown as a Venn diagram in Figure \ref{fig:relations}, and simplifies our analysis in the next section: if we want to find an agreement that is Pareto-optimal, envy-free and credible, we need only look for covering agreements that are envy-free.  

\begin{figure}[t]
    \centering
    \includegraphics[width=0.5\linewidth]{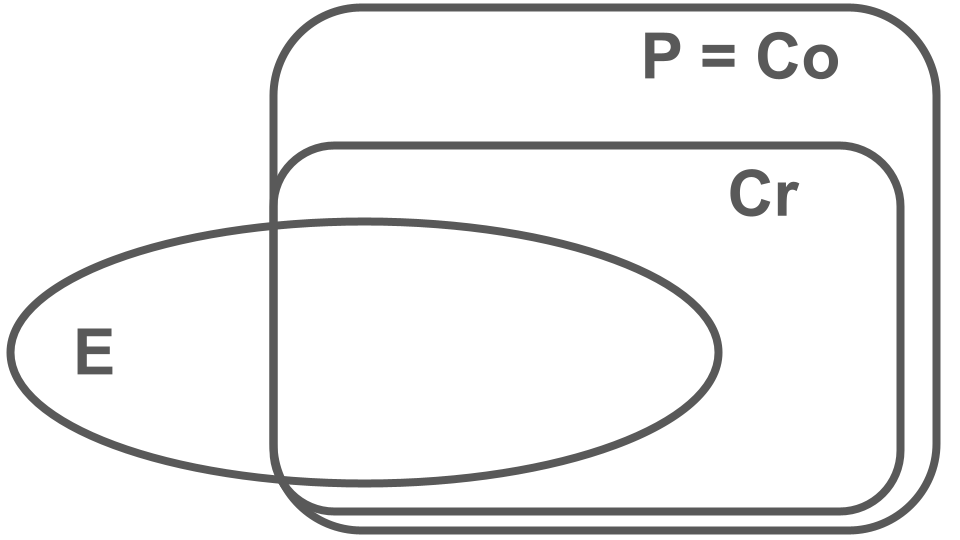}
    \caption{Venn diagram of agreements that are Pareto-optimal (P), covering (Co), credible (Cr), and envy-free (E).}
    \label{fig:relations}
\end{figure}


\section{Efficiency of augmented SCG} \label{sec:analysis}

In this section, we first establish that an outcome that is \textit{evenly distributed} is $\bar c$-optimal and $\hat c$-optimal, and then show that a \textit{balanced} partition induces outcomes that are envy-free, credible, Pareto-optimal, and evenly distributed, and therefore $\bar c$-optimal and $\hat c$-optimal.

\subsection{Optimal outcomes}

We consider an agreement where players choose among resources `as evenly as possible', and show that it is socially optimal with respect to $\bar c$ and $\hat c$.

\begin{definition}[Even distribution]
    A joint action $a_S \in M^{|S|}$ of a subset of players $S \subseteq N$ is \textit{evenly distributed} if for all $x, y \in M$, we have $ |n_x(a_S) - n_y(a_S)| \leq 1$. 
\end{definition} 

The requirement that the differences in the number of players choosing different resource is at most $1$ means that players are spread as evenly as possible among all resources.

\begin{lemma}[Efficient outcomes] \label{lem:efficient_outcomes}
    An outcome $a \in M^n$ is $\bar c$-optimal if and only if $a$ is evenly distributed. An outcome $a \in M^n$ is $\hat c$-optimal if $a$ is evenly distributed.
\end{lemma}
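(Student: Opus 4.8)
The plan is to reduce both claims to properties of the single-variable function $h(k) := k\,f(k)$, which is the per-resource contribution to the total cost, so that $\bar c(a) = \sum_{x \in M} h(n_x(a))$, together with the observation that $\hat c(a) = f(\max_{x \in M} n_x(a))$ since $f$ is increasing. Everything then follows from the \emph{discrete convexity} of $h$ plus an exchange (rebalancing) argument on load vectors.

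First I would establish that $h$ is strictly convex on $\mathbb{N}$ in the discrete sense, i.e.\ its second forward difference is strictly positive. Writing $\Delta f(k) = f(k+1) - f(k)$, a direct expansion gives $h(k+2) - 2h(k+1) + h(k) = (k+2)\,\Delta f(k+1) - k\,\Delta f(k)$. Because $f$ is strictly increasing ($\Delta f > 0$) and weakly convex ($\Delta f(k+1) \ge \Delta f(k)$), this quantity is at least $2\,\Delta f(k) > 0$. Equivalently, the marginal cost $\Delta h(k) = h(k+1) - h(k)$ of adding one more player to a resource already used by $k$ players is strictly increasing in $k$. This monotonicity of the marginal is the engine for everything that follows.

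For the $\bar c$ statement I would argue both directions via this exchange. The set of load vectors is finite, so $\bar c$ attains a minimum. If a minimizer $a$ were not evenly distributed, there would be resources $x, y$ with $n_x(a) \ge n_y(a) + 2$; moving a single player from $x$ to $y$ changes the total cost by $\Delta h(n_y(a)) - \Delta h(n_x(a)-1)$, which is strictly negative since $n_x(a) - 1 > n_y(a)$ and $\Delta h$ is strictly increasing, contradicting optimality. Hence every minimizer is evenly distributed. Conversely, writing $n = qm + r$ with $0 \le r < m$, any evenly distributed outcome must place load $q+1$ on exactly $r$ resources and load $q$ on the remaining $m - r$ (the constraint $\sum_x n_x(a) = n$ forces this count), so all evenly distributed outcomes share the same total cost $r\,h(q+1) + (m-r)\,h(q)$. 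Since at least one such outcome exists and all of them tie for the minimum, the evenly distributed outcomes are exactly the $\bar c$-optimal ones, yielding the ``if and only if''.

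For the $\hat c$ statement only the ``if'' direction is needed. By pigeonhole, $\max_{x} n_x(a) \ge \lceil n/m \rceil$ for every outcome, and as $f$ is increasing this gives $\hat c(a) = f(\max_x n_x(a)) \ge f(\lceil n/m \rceil)$. An evenly distributed outcome has maximum load exactly $\lceil n/m \rceil$ (that is, $q+1$ when $r > 0$ and $q$ when $r = 0$), so it attains the bound and is $\hat c$-optimal; I would note that the converse fails, since an unbalanced but maximally loaded configuration can share the same maximum load, which is precisely why only one implication is claimed. The discrete-convexity computation is the one place requiring care, but the main conceptual obstacle is organizing the $\bar c$ equivalence cleanly, which I resolve by first showing every minimizer must be evenly distributed and then showing all evenly distributed outcomes have identical cost.
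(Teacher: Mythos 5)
Your proof is correct and follows essentially the same route as the paper's: an exchange argument showing that moving one player from an overloaded resource to an underloaded one strictly decreases $\bar c$ (the paper bounds the cost difference directly from the monotonicity and convexity of $f$, whereas you package the identical inequality as strict discrete convexity of $h(k) = k\,f(k)$), followed by the observation that all evenly distributed outcomes share the same total cost. Your explicit pigeonhole bound $\max_x n_x(a) \ge \lceil n/m \rceil$ for the $\hat c$ claim is a small detail the paper leaves implicit, but the substance of the argument is the same.
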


\begin{proof}
    ($\Rightarrow$, $\bar c$-optimality) Consider an outcome $a$ that is not evenly distributed, i.e., there exists some $x, y \in M$ such that $n_x(a) - n_y(a) \geq 2$ (for brevity we drop the reference to $a$ in the remainder of this proof). For $\bar c$, the costs associated with these resources are $n_x f(n_x) + n_y f(n_y)$. If one of the players choosing $x$ changes their choice to $y$, then the costs associated with these resources become $(n_x - 1) f(n_x -1) + (n_y + 1) f(n_y + 1)$. The effect of this change can be expressed as  
    \begin{align*}
        &(n_x - 1)[f(n_x) - f(n_x -1)] - (n_y + 1)[f(n_y + 1) - f(n_y)]\\
        &+ f(n_x)-f(n_y)
    \end{align*}
    
    By assumption, $n_x - 1 \geq n_y + 1$, and by the convexity of $f$, $f(n_x) - f(n_x - 1) \geq f(n_y+1) - f(n_y)$, and thus the sum of the first two terms is non-negative. Since $f$ is increasing, $f(n_x) > f(n_y)$, and the sum of the final two terms is positive. The change therefore decreases $\bar c$, so $a$ cannot be $\bar c$-optimal.

    ($\Leftarrow$, $\bar c$-optimality \& $\hat c$-optimality) Notice that all outcomes where players are evenly distributed have the same social costs $\bar c = l f(y+1) + (n-l) f(y)$ and $\hat c = f(y+1)$, where $y = \lfloor \frac{n}{m} \rfloor$, and $l = n \bmod m$. Since these are the optimal social costs, all evenly distributed outcomes are $\bar c$-optimal and $\hat c$-optimal.
\end{proof}

\begin{figure}[t] 
    \begin{subfigure}[t]{0.21\textwidth}
        \centering
        \includegraphics[scale = 0.3]{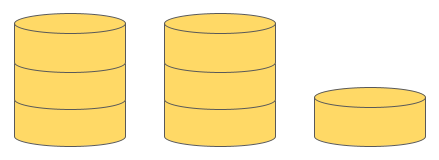}
        \caption{An uneven yet \\$\hat c$-optimal outcome}
    \end{subfigure}%
    \hfill
    \begin{subfigure}[t]{0.26\textwidth}
        \centering
        \includegraphics[scale = 0.3]{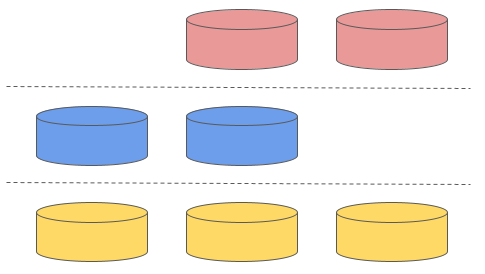}
        \caption{An unbalanced partition that always induces $\hat c$-optimal outcomes}
    \end{subfigure}%
    \caption{Examples of 7-player 3-resource outcomes. Players are represented as coins and resources as columns. A partition is represented by dotted lines and coloured coalitions.}
    \label{fig:unbalanced examples}
\end{figure}

Note that a $\hat c$-optimal outcome need not be evenly distributed. See for example Figure~\ref{fig:unbalanced examples}a.

\subsection{Efficient partition}

We now show that under the symmetry principle, efficient outcomes can be induced by partitioning players in a balanced way.

\begin{definition} [Balanced partition]
    A partition $\pi$ on $N$ is \textit{balanced} if at most one coalition has size such that $|C| < m$, with the remaining coalitions having sizes such that $m \mid |C|$ ($|C|$ is divisible by $m$).
\end{definition}

In other words, a balanced partition is the result of partitioning players into coalitions of sizes that are divisible by $m$, with at most one coalition of `remainders' that is smaller than $m$. Naturally, if the total number of players is divisible by the number of resources, i.e., $m \mid n$, then a partition $\pi$ on $N$ is balanced if and only if the size of every coalition is divisible by $m$.

\begin{definition} [$\pi$-induced partition equilibrium]
    A partition $\pi$ \textit{induces} a feasible strategy profile $s \in \Delta (M^n)$ if, under the symmetry principle, every outcome $a$ in $s$ with positive probability is composed of credible, envy-free and Pareto-optimal agreements $a_C$ for all $C \in \pi$. In this case, we call $s$ a \textit{$\pi$-induced partition equilibrium}.
\end{definition}

The idea of an induced strategy profile is that, given a partition, each coalition of players can reach an agreement that is envy-free, credible and Pareto-optimal, which is expressed as the induced strategy profile when aggregated across all coalitions. Not all partitions induce a feasible strategy profile, since the partition may contain a coalition where there is no agreement that satisfy envy-freeness, credibility and Pareto-optimality (see Lemma \ref{lem:no_agreement_coalition} later).

The definitions of efficiency easily extends from outcomes to feasible strategy profiles: $s \in \Delta (M^n)$ is $\bar c$ (or $\hat c$)-optimal if all its outcomes with positive probability is $\bar c$ (or $\hat c$)-optimal. Therefore a feasible strategy profile is $\bar c$-optimal and $\hat c$-optimal if all the outcomes that occur with positive probability are evenly distributed, since every outcome that is evenly distributed is $\bar c$-optimal and $\hat c$-optimal.

We now establish, for each type of coalition within a balanced partition (i.e., $m \mid |C|$, $|C| < m$, or neither), the conditions for an agreement to be envy-free, credible, and Pareto-optimal. 

\begin{lemma} \label{lem:full_coalition}
    Under the symmetry principle, an agreement $a_C$ reached by a coalition of players $C \in \pi$ of size divisible by $m$ (i.e., $m \mid |C|$) is envy-free, credible, and Pareto-optimal if and only if each resource is chosen by exactly $\frac{|C|}{m}$ players.
\end{lemma}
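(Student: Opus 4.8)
The plan is to reduce the three conditions to a single combinatorial statement by leaning on the results already established. First I would invoke Corollary~\ref{cor:subgame} together with Proposition~\ref{prop:subgame} to replace the original cost function $f$ by the effective symmetric cost $g = g_\mu$: under the symmetry principle every member of $C$ faces the same strictly increasing, convex cost $g$ on each resource, and the expected cost to a player occupying a resource shared by $v$ coalition members is exactly $g(v)$. Consequently the expected cost to player $i$ under the agreement is $g(n_{a_i}(a_C))$, so the coalition is effectively playing a grand-coalition SCG with the common cost $g$, and envy-freeness, credibility, and Pareto-optimality can all be assessed through $g$ alone.

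Next I would use Proposition~\ref{prop:covering_agreements} to simplify the target property. Since Pareto-optimality is equivalent to being covering, and a covering envy-free agreement is automatically credible, the conjunction ``envy-free, credible, and Pareto-optimal'' collapses to the simpler conjunction ``covering and envy-free''. This is the crucial simplification: after it, I need only characterise the covering envy-free agreements in terms of $g$.

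The key step is to translate envy-freeness into a statement about resource occupancies. Envy-freeness requires $g(n_{a_i}(a_C)) = g(n_{a_j}(a_C))$ for all $i, j \in C$; because $g$ is strictly increasing it is injective, so this forces $n_{a_i}(a_C) = n_{a_j}(a_C)$, i.e. every used resource carries the same number of coalition members. Since $m \mid |C|$ entails $|C| \ge m$, a covering agreement uses every one of the $m$ resources, so all $m$ occupancies are equal; as they sum to $|C|$, each must equal $\frac{|C|}{m}$, which is an integer precisely because $m \mid |C|$. This yields the forward direction. For the converse, if every resource carries exactly $\frac{|C|}{m} \ge 1$ players, the agreement is covering and every player incurs the identical cost $g\!\left(\frac{|C|}{m}\right)$, hence envy-free; Proposition~\ref{prop:covering_agreements} then supplies credibility and Pareto-optimality.

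I expect the main subtlety to lie not in any single calculation but in justifying the reduction to the symmetric cost $g$: one must be careful that, under the symmetry principle, a coalition member's expected cost depends only on how many fellow members share its resource and not on which resource it is, so that envy-freeness genuinely collapses to equality of occupancies via the injectivity of $g$. Once this reduction is secured, the strict monotonicity of $g$ and the divisibility hypothesis do the rest, and both directions follow immediately.
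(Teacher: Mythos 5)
Your proof is correct and follows essentially the same route as the paper's: reduce to the effective subgame cost via Corollary~\ref{cor:subgame}, collapse the three conditions to ``covering and envy-free'' via Proposition~\ref{prop:covering_agreements}, and use strict monotonicity of the cost to force equal occupancies. If anything, your forward direction is slightly more careful than the paper's, since you explicitly use the covering property to rule out unused resources (a case the paper's envy argument glosses over) before concluding each occupancy equals $\frac{|C|}{m}$.
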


\begin{proof}
    ($\Rightarrow$) From Corollary \ref{cor:subgame}, players in $C$ need only consider the actions of other players within the coalition. Suppose an agreement $a_C$ is such that some resource $x$ is chosen by more players than some other resource $y$, i.e., $n_x(a_C) > n_y(a_C)$. Then $f(n_x(a_C)) > f(n_y(a_C)))$, and a player choosing $x$ would envy a player choosing $y$, so $a_C$ is not envy-free. ($\Leftarrow$) Suppose an agreement $a_C$ is such that each resource is chosen by the same number of players. It is clear that such an agreement is envy-free (all players incur the same cost) and covering, and therefore by Proposition \ref{prop:covering_agreements} it is also credible and Pareto-optimal.
\end{proof}

\begin{lemma} \label{lem:incomplete_coalition}
    Under the symmetry principle, an agreement $a_C$ among a coalition of players $C \in \pi$ of size $|C| < m$ is envy-free, credible, and Pareto-optimal if and only if no resource is chosen by more than $1$ player.
\end{lemma}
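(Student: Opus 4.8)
The plan is to prove Lemma~\ref{lem:incomplete_coalition} by reducing it, as much as possible, to Proposition~\ref{prop:covering_agreements}, exactly as was done for Lemma~\ref{lem:full_coalition}. First I would invoke Corollary~\ref{cor:subgame}: under the symmetry principle, players in $C$ share a common, symmetric conjecture about $-C$, so they can treat their interaction as a grand-coalition subgame with an increasing, convex cost function $g_\mu$. This means envy-freeness, credibility, and Pareto-optimality can all be assessed purely in terms of how the $|C| < m$ players distribute themselves over the $m$ resources, without reference to the outside coalitions.

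For the ($\Leftarrow$) direction, suppose no resource is chosen by more than one player. Since $|C| < m$, this is exactly a covering agreement for the $|C| < m$ case (the definition requires $n_x(a_C) \leq 1$ for all $x$). Every player then incurs the same expected cost $g_\mu(1)$, so the agreement is trivially envy-free; being covering, Proposition~\ref{prop:covering_agreements} immediately gives Pareto-optimality, and a covering-and-envy-free agreement is credible by the same proposition. For the ($\Rightarrow$) direction, I would argue the contrapositive: if some resource $x$ has $n_x(a_C) \geq 2$, then because $|C| < m$ there must exist an unused resource $y$ with $n_y(a_C) = 0$. This makes the agreement non-covering, so it fails Pareto-optimality by Proposition~\ref{prop:covering_agreements} (and indeed a player on $x$ strictly envies the situation of deviating to $y$, violating credibility directly since $g_\mu(2) > g_\mu(0+1) = g_\mu(1)$ by strict monotonicity). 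Either way the three conditions fail, completing the equivalence.

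The only mild subtlety is the pigeonhole observation that drives the ($\Rightarrow$) direction: with strictly fewer players than resources, any resource carrying two or more players forces a vacant resource to exist, which is what makes the agreement non-covering and hence breakable. I expect this to be the one place needing a word of justification, and it is where the asymmetry with Lemma~\ref{lem:full_coalition} lives — there envy-freeness forced a \emph{perfectly even} split, whereas here the binding constraint is simply that nobody shares a resource. I would state the pigeonhole step explicitly and then let Proposition~\ref{prop:covering_agreements} do the remaining work, so the argument stays short and parallels the structure of the preceding lemma.
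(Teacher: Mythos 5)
Your proposal is correct and follows essentially the same route as the paper's own proof: invoke Corollary~\ref{cor:subgame} to reduce to the subgame, observe that a shared resource under $|C| < m$ forces a vacant resource and hence a non-covering (thus non-Pareto-optimal) agreement, and in the converse direction note that a one-per-resource assignment is covering and envy-free, so Proposition~\ref{prop:covering_agreements} yields credibility and Pareto-optimality. Your explicit statement of the pigeonhole step is a minor clarification of a point the paper leaves implicit, but the argument is the same.
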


\begin{proof}
    ($\Rightarrow$) From Corollary \ref{cor:subgame}, players in $C$ need only consider the actions of other players within the coalition. Suppose an agreement $a_C$ is such that some resource $x$ is chosen by more than 1 player, and therefore there is some resource $y$ that is not chosen by any player. Then the agreement is not covering, and by Proposition \ref{prop:covering_agreements} not Pareto-optimal. ($\Leftarrow$) Suppose an agreement $a_C$ is such that each resource is chosen by at most 1 player. It is clear that such an agreement is envy-free (all players incur the same cost $f(1)$) and covering, and therefore by Proposition \ref{prop:covering_agreements} it is also credible and Pareto-optimal.
\end{proof}

\begin{lemma} \label{lem:no_agreement_coalition}
    Under the symmetry principle, a coalition of players $C \in \pi$ of size $|C| > m$ and $m \nmid |C| $ cannot reach an envy-free, credible, and Pareto-optimal agreement.
\end{lemma}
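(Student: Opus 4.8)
The plan is to assume, for contradiction, that such an agreement $a_C$ exists, and to derive that $m$ must divide $|C|$, contradicting the hypothesis. The key leverage comes from Proposition~\ref{prop:covering_agreements}: since $a_C$ is Pareto-optimal (and credible), it must be a covering agreement. As $|C| > m$, the covering condition forces every resource to be used, i.e. $n_x(a_C) \geq 1$ for all $x \in M$. So it remains only to show that the additional requirement of envy-freeness pins the occupancy of every resource to a common value, which immediately yields the divisibility contradiction.

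First I would invoke Corollary~\ref{cor:subgame}, which lets us treat $C$ as playing a subgame governed by a single strictly increasing cost function $g_\mu$ that is identical across all resources. In this subgame, the expected cost to any player occupying a resource $x$ is $g_\mu(n_x(a_C))$ by Proposition~\ref{prop:subgame}. Envy-freeness demands that all players in $C$ incur the same expected cost, so for any two used resources $x, y$ we must have $g_\mu(n_x(a_C)) = g_\mu(n_y(a_C))$. Since $g_\mu$ is strictly increasing, hence injective, this equality forces $n_x(a_C) = n_y(a_C)$.

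Combining the two observations, every resource is used (by covering) and every pair of resources carries the same number of players (by envy-freeness), so there is a common value $k \geq 1$ with $n_x(a_C) = k$ for all $x \in M$. Summing over the $m$ resources gives $|C| = \sum_{x \in M} n_x(a_C) = k m$, contradicting the hypothesis $m \nmid |C|$. Hence no envy-free, credible, and Pareto-optimal agreement can exist for such a coalition.

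The argument is short, and I do not anticipate a genuine obstacle; the one point requiring care is the translation from envy-freeness to equal occupancy, which rests essentially on the \emph{strict} monotonicity of $g_\mu$ established in Proposition~\ref{prop:subgame}. Were $g_\mu$ only weakly increasing, distinct occupancies could in principle yield equal expected costs and the contradiction would collapse, so it is the strictness, inherited from $f' > 0$, that does the real work here.
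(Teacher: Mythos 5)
Your proposal is correct and follows essentially the same route as the paper's own proof: Pareto-optimality forces a covering agreement, so with $|C| > m$ every resource is used, and then $m \nmid |C|$ makes equal occupancy impossible, which violates envy-freeness via the strict monotonicity of the expected cost. The paper states this more tersely (it does not explicitly invoke Corollary~\ref{cor:subgame} or the injectivity of $g_\mu$), but the underlying argument is identical.
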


\begin{proof}
    For an agreement to be Pareto-optimal, it must be covering, and since $|C| > m$, all resources are used. Because $m \nmid |C|$, some resources must be shared by more players than others, and thus the envy-free condition cannot be satisfied.
\end{proof}

From Lemmas \ref{lem:full_coalition} and \ref{lem:incomplete_coalition}, we can see that a balanced partition $\pi$ induces a very particular feasible strategy profile, where each coalition with $m | |C|$ reach an agreement where all the resources are chosen by the same number of players, and the coalition with $|C| < m$ (if there is one such coalition) reach a covering agreement. We now show that this induced partition equilibrium is always efficient. Moreover, due to Lemma \ref{lem:no_agreement_coalition}, an induced partition equilibrium is $\bar c$-efficient only if the partition which induces it is balanced.

\begin{theorem} [Optimal partition]
    A $\pi$-induced partition equilibrium is $\bar c$-optimal if and only if $\pi$ is balanced. A $\pi$-induced partition equilibrium is $\hat c$-optimal if it is balanced.
\end{theorem}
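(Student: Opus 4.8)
The plan is to prove the theorem by combining the structural characterisation of induced equilibria from Lemmas~\ref{lem:full_coalition}--\ref{lem:no_agreement_coalition} with the efficiency characterisation in Lemma~\ref{lem:efficient_outcomes}. The statement has two directions for $\bar c$-optimality and a one-directional claim for $\hat c$-optimality, so I would organise the proof accordingly. For the easier implication (balanced $\Rightarrow$ optimal), the key observation is that a balanced partition induces agreements of a very rigid form: every coalition $C$ with $m \mid |C|$ places exactly $|C|/m$ players on each resource (Lemma~\ref{lem:full_coalition}), and the at-most-one remainder coalition $C'$ with $|C'| < m$ places at most one player on each resource (Lemma~\ref{lem:incomplete_coalition}). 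I would then argue that aggregating these agreements yields an evenly distributed outcome: the full coalitions contribute the same count to every resource, so they preserve evenness, and adding the covering agreement of $C'$ increases the count on at most $|C'|$ resources by exactly $1$. Hence across all resources the total counts differ by at most $1$, so every outcome with positive probability in the induced equilibrium is evenly distributed, and by Lemma~\ref{lem:efficient_outcomes} it is both $\bar c$-optimal and $\hat c$-optimal.

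For the converse ($\bar c$-optimal $\Rightarrow$ balanced), I would argue contrapositively. Suppose $\pi$ is not balanced. By the definition of balanced, this means either two or more coalitions have size not divisible by $m$, or some coalition $C$ has $|C| > m$ with $m \nmid |C|$. The second case is the cleaner one to rule out: by Lemma~\ref{lem:no_agreement_coalition} such a coalition cannot reach any envy-free, credible, and Pareto-optimal agreement at all, so $\pi$ induces no partition equilibrium, and the $\bar c$-optimality statement holds vacuously or must be read as applying only to partitions that do induce an equilibrium. I would therefore restrict attention to the remaining unbalanced case, namely two distinct coalitions $C_1, C_2$ each of size strictly less than $m$ but with at least one of the two sizes not creating divisibility---more precisely, a partition that induces an equilibrium yet fails to be balanced, which by the lemmas forces at least two coalitions of size $< m$. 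Each such coalition reaches a covering agreement (Lemma~\ref{lem:incomplete_coalition}), and I would show that one can choose the induced symmetric agreements of $C_1$ and $C_2$ so that both pile their players onto an overlapping set of resources, producing an outcome in which some resource carries at least $2$ while another carries $0$, violating even distribution; by Lemma~\ref{lem:efficient_outcomes} this outcome is not $\bar c$-optimal, so the equilibrium is not $\bar c$-optimal.

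The main obstacle I anticipate is the converse direction, and specifically making the overlap argument rigorous. The induced partition equilibrium assigns positive probability to \emph{all} symmetric relabellings of each coalition's agreement, so I cannot simply pick a convenient aligned outcome by fiat; instead I need to verify that among the outcomes with positive probability there exists at least one that is not evenly distributed. Since the symmetry principle spreads probability over every permutation of resource labels independently across coalitions (the feasibility/independence noted in Remark~\ref{rem:symmetry}), the coordinate-wise product structure guarantees that some joint realisation aligns the two small coalitions' occupied resources so as to double-up on fewer than the required number of resources; I would spell out this counting to confirm the existence of an uneven outcome with positive probability. A subtlety worth flagging explicitly is the interaction between the ``only if'' claim and Lemma~\ref{lem:no_agreement_coalition}: strictly speaking, a non-balanced partition containing an oversized indivisible coalition induces no equilibrium, so the cleanest phrasing is that \emph{whenever} a $\pi$-induced partition equilibrium exists, it is $\bar c$-optimal precisely when $\pi$ is balanced, and I would note this reading at the start of the converse.
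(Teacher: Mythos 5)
Your proposal is correct and follows essentially the same route as the paper: balanced partitions force evenly distributed aggregate outcomes via Lemmas~\ref{lem:full_coalition} and \ref{lem:incomplete_coalition} and hence optimality via Lemma~\ref{lem:efficient_outcomes}, while an unbalanced partition that still induces an equilibrium must contain two coalitions of size less than $m$ whose independent, symmetric covering agreements misalign with positive probability. Your maximal-overlap realisation (one resource doubly occupied, one unoccupied) is a slightly cleaner single-case version of the paper's two-case split on whether $|C|+|D|\leq m$, and your explicit flagging of the vacuity issue from Lemma~\ref{lem:no_agreement_coalition} matches the paper's restriction to partitions that can induce an equilibrium.
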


\begin{proof}
    ($\Rightarrow$, $\bar c$-optimality) 
    By Lemmas\ref{lem:full_coalition}, \ref{lem:incomplete_coalition} and \ref{lem:no_agreement_coalition}, only partitions where for each $C \in \pi$, $m | |C|$ or $|C| < m$ can induce a partition equilibrium, so we need only consider such partitions. For such a partition to be unbalanced, there must exist distinct $C, D \in \pi$ such that $|C| < m$ and $|D| < m$, and $\pi$ induces agreements in $C$ and $D$ that are evenly distributed. If $|C| + |D| \leq m$, then there is a positive probability (since players cannot coordinate across coalitions) that some resource $m$ is chosen by two players, one from each coalition. If $|C| + |D| > m$, then there is a positive probability that some resource $m$ is chosen by none of the players. In either case, there is a positive probability that the resulting outcome is not evenly distributed, and therefore the induced partition equilibrium is not $\bar c$-optimal.

    ($\Leftarrow$, $\bar c$-optimality and $\hat c$-optimality) If $m | n$, then a balanced partition consists of all coalitions of sizes divisible by $m$, so by Lemma \ref{lem:full_coalition} each coalition reaches an evenly distributed agreement, and all possible outcomes are evenly distributed, and therefore the induced partition equilibrium is $\bar c$-optimal and $\hat c$-optimal. If $m \nmid n$, then by Lemma \ref{lem:incomplete_coalition} the coalition $C$ where $|C| < m$ reaches an agreement where at most $1$ player chooses each resource. Combined with agreements reached by all other coalitions, all possible outcomes are again evenly distributed (since the difference in the number of players choosing each resource is at most $1$), and therefore the induced partition equilibrium is $\bar c$-optimal and $\hat c$-optimal.
\end{proof}

Notice that a partition need not be balanced to induce a $\hat c$-optimal generalised strategy profile. For an example see Figure \ref{fig:unbalanced examples}b.


\section{Differential cost functions}

A natural extension to consider is whether our results generalise to broader classes of SCG. In this section we briefly discuss the extent to which our analysis can be generalised to resources with differential cost functions. Another interesting extension concerning weighted players is too broad to cover in this paper, and we defer such discussion to a future paper.

When resources have different costs $f_x$ for each $x \in M$, and that these costs are known to the players, we face two further issues concerning whether there exists a communication partition that induces efficient outcomes.

First, envy-freeness becomes an even stronger condition to satisfy. Consider a simple example of a game of 2 players $\{1, 2\}$ and 2 resources $\{x, y\}$, where $f_x(n) = 1.1 \cdot f_y(n)$, that is, resource $x$ is a little more costly than $y$ given the same number of users $n$. We first consider the case where the partition is the grand coalition. Clearly, the $\bar c$-optimal and $\hat c$-optimal outcome is for the players to choose different resources, and such an agreement would also be credible and Pareto-optimal. However, such an agreement could not be envy-free, since the player that is allocated $x$ would envy the player that is allocated the cheaper resource $y$, and thus players might reach a deadlock. Alternatively, both players choosing the same resource $x$ or $y$ is clearly envy-free, but not credible nor Pareto-optimal. 

Second, the action-symmetry of the game is broken, and we can no longer rely on the symmetry principle to provide a common, symmetric prior. In the above example, if the partition is a singleton partition (i.e., players cannot communicate), then it is not clear how each player should model the behaviour of the other. Additional assumptions about how players behave (or more precisely, how players believe others behave) are required in order analyse this problem.

Nevertheless, both issues fall away if we assume that all players are aware of the distribution of the cost functions among resources, but not the cost associated with each resource. As we showed in Proposition \ref{prop:subgame}, a convex combination of increasing and convex functions is also increasing and convex, and the expected cost function is again identical across resources. Thus players will behave as if resources are identical. Our analysis on the efficient partition in SCG is therefore applicable to games with differential cost functions where only the distribution is known.

\section{Conclusion}

Player communication is a natural consideration in congestion games, where players are self-interested but wish to avoid poorly coordinated outcomes. However, allowing all players to communicate with each other may be practically infeasible, nor satisfactory (from an envy-free perspective) for the players involved. The use of an appropriately designed communication partition can simplify the coordination problem for all players concerned and avoid envy-induced deadlocks.

We believe this mechanism can be applied to large, decentralised resource allocation problems (e.g., in an online anonymous setting), which we model using the SCG in this work. Some of the natural extensions of this problem, such as having resources of known and differential cost functions, and players with known and differential weights, are interesting computational problems that warrant further studies.

\bibliographystyle{named}
\bibliography{references}

\end{document}